\newcommand{\bm}[1]{\mbox{\boldmath $ {#1} $}}
\renewcommand{\epsilon}{\varepsilon}
\newtheorem{theorem}{Theorem}[section]
\newtheorem{lemma}[theorem]{Lemma}
\theoremstyle{definition}
\newtheorem{rem}[theorem]{Remark}
\newtheorem{defi}[theorem]{Definition}
\newtheorem{claim}[theorem]{Claim}
\newtheorem{question}[theorem]{Question}
\numberwithin{equation}{section}
\numberwithin{theorem}{section}
\begin{document}

\title[A variant of the multi-agent rendezvous problem] 
{A variant of the multi-agent rendezvous problem}

\author{Peter Hegarty, Anders Martinsson, Dmitry Zhelezov} \address{Department 
of Mathematical Sciences, 
Chalmers University Of Technology and University of Gothenburg,
41296 Gothenburg, Sweden} \email{hegarty@chalmers.se, andemar@chalmers.se, 
zhelezov@chalmers.se}

%\author{Anders Martinsson} \address{Department of Mathematical Sciences, Chalmers University Of Technology and University of Gothenburg,
%41296 Gothenburg, Sweden} \email{andemar@chalmers.se}

%\author{Dmitry Zhelezov} \address{Department of Mathematical Sciences, Chalmers University Of Technology and University of Gothenburg,
%41296 Gothenburg, Sweden} \email{zhelezov@chalmers.se}

\subjclass[2000]{68W20, 68M14, 68T40, 60D05.} \keywords{}

\date{\today}

\begin{abstract} 
The classical multi-agent rendezvous problem asks for a deterministic 
algorithm by which $n$ points scattered in a plane can move about at constant 
speed and merge at a single point, assuming each point can use only the 
locations of the others it sees when making decisions and that the visibility 
graph as a whole is connected. In time complexity analyses of such algorithms,
only the number of rounds of computation required are usually considered, not 
the amount of computation done per round. In this paper, we consider 
$\Omega(n^2 \log n)$ points distributed independently and uniformly at random 
in a disc of radius $n$ and, assuming each point can not only see but 
also, in principle, communicate with 
others within unit distance, seek a randomised merging algorithm which 
asymptotically almost surely (a.a.s.) runs in time $O(n)$, in other words in 
time linear in the radius of the disc rather than in the number of points. 
Under a precise set of assumptions concerning the communication capabilities 
of neighboring points, we describe an algorithm which a.a.s. runs in time 
$O(n)$ provided the number of points is $o(n^3)$. Several questions are posed 
for future work.  
\end{abstract}

\maketitle

\setcounter{section}{-1}

\section{Notation}\label{sect:notation}

Let $g, h: \mathbb{N} \rightarrow \mathbb{R}_{+}$ be any two functions. We 
will employ the following notations throughout, all of which are quite 
standard: 
\par (i) $g(n) \sim h(n)$ means that $\lim_{n \rightarrow \infty} 
\frac{g(n)}{h(n)} = 1$.
\par (ii) $g(n) \lesssim h(n)$ means that $\limsup_{n \rightarrow \infty} 
\frac{g(n)}{h(n)} \leq 1$.
\par (iii) $g(n) \gtrsim h(n)$ means that $h(n) \lesssim g(n)$.
\par (iv) $g(n) = O(h(n))$ means that $\limsup_{n \rightarrow \infty} 
\frac{g(n)}{h(n)} < \infty$.
\par (v) $g(n) = \Omega(h(n))$ means that $h(n) = O(g(n))$.
\par (vi) $g(n) = \Theta(h(n))$ means that both $g(n) = O(h(n))$ and 
$h(n) = O(g(n))$ hold.
\par (vii) $g(n) = o(h(n))$ means that $\lim_{n \rightarrow \infty} 
\frac{g(n)}{h(n)} = 0$. 
\\
Now suppose instead that $(g(n))_{n=1}^{\infty}, (h(n))_{n=1}^{\infty}$ are two
sequences of random variables. We write $g(n) \sim h(n)$ if, 
for all $\varepsilon_1, \varepsilon_2 > 0$ and $n$ sufficiently large, 
\begin{equation}
\mathbb{P} \left( 1-\varepsilon_1 < \frac{g(n)}{h(n)} < 1 + \varepsilon_1 \right) > 1-\varepsilon_2.
\end{equation}
Similarly, we write $g(n) \lesssim h(n)$ if, for all $\varepsilon_1,
\varepsilon_2 > 0$ and $n$
sufficiently large, 
\begin{equation}
\mathbb{P} \left( \frac{g(n)}{h(n)} < 1 + \varepsilon_1 \right) > 1 - 
\varepsilon_2.
\end{equation}
We will employ the standard phrase ``asymptotically almost surely (a.a.s.)''
when considering a sequence of events $(A_n)_{n=1}^{\infty}$ such that
$\mathbb{P}(A_n) \rightarrow 1$ as $n \rightarrow \infty$. 
We use the notation $n \gg 0$ to denote
``$n$ sufficiently large''.
\par Finally, if $\bm{p}$ is a point in $\mathbb{R}^2$ and $\varepsilon > 0$,
we denote by $B_{\varepsilon}(\bm{p})$ the open ball of radius $\varepsilon$
about $\bm{p}$.  

\setcounter{equation}{0} 

\section{Introduction}\label{sect:intro}

One day while performing the mathematical equivalent of sitting at a bar, 
drinking beer and philosophising - that is, browsing the day's listings on 
arxiv.org - one of the 
three authors was somehow reminded of a classic scene from the movie 
Terminator 2 \cite{CW}. At one point toward the end of the film, the evil 
Terminator becomes liquified after an explosion. Blobs of liquid metal are 
scattered on the ground and he
appears to have been ``terminated''. However, as if guided by some mysterious 
superior intelligence, the blobs suddenly start moving 
toward one another, eventually merging and reforming the intact Terminator,
who then sets off on the rampage again. 
\\
\par
At the time, we were completely unaware of the extensive literature on the
subject of \emph{rendezvous problems}. 
The movie scene got us thinking, and the 
results presented in this article are the result of those thoughts.
Only afterwards, when we circulated a draft paper to a colleague who is an
expert in computational geometry, were we alerted to the existing literature. 
Having examined the latter, it seems to us that the kind of multi-agent 
rendezvous problem we ended up considering has two important differences from 
what is studied in the literature. 
Firstly, we imagined that agents which could see 
each other could also, in principle, communicate and thus exchange information.
Our agents are thus more anthropomorphised than the robots normally
imagined. Secondly, we took a ``probabilist's approach'', and sought randomised 
algorithms which would work well, for generic configurations of agents, and in 
the limit as both the number and the spatial distribution of agents tended to 
infinity. 
These facts probably render our approach of less practical significance, 
though we will try to convince the reader that it is, at the very least, still 
an interesting thought-experiment and involves some elegant mathematics. 
\\
\par
Rendezvous problems (RPs) originated in work of Steven Alpern, who considered 
the problem faced by two people who are placed randomly in a known, bounded 
search 
region in $\mathbb{R}^2$ and move about at unit speed to find each other in 
the least expected time. Alpern originally imagined that there were a finite 
number of possible meeting points, but later formalised the continuous version 
of the problem, and even generalised it to other compact 
metric spaces \cite{A}. A 
variety of models have been proposed in the case of an arbitrary number of 
agents, which are then usually thought of as autonomous robots, rather than 
fully-fledged humans. An elegant, and  classical model, makes the following 
assumptions:
\\
\\
\emph{RP-1:} Each robot is idealised as a mobile point in $\mathbb{R}^2$. 
Rendezvous means the merging of all the robots at a single point. The robots
are assumed to be identical. 
\\
\emph{RP-2:} The robots can move at constant speed.
\\
\emph{RP-3:} There is some fixed constant $\varepsilon > 0$ such that, at all 
times, each robot can only see those others which are within a 
distance $\varepsilon$ of itself. For a given 
configuration of point robots, this gives rise to a geometric 
graph $G$ whose vertices 
are the points, and where an edge is placed between any two points at distance 
at most $\varepsilon$ from each other. Usually $G$ is called the 
{\em visibility graph} corresponding to the point configuration. 
Any two points connected by an edge we shall refer to as {\em neighbors}. 
\\
\emph{RP-4:} The initial configuration is such that the visibility graph is 
connected. However, the robots have no a priori knowledge 
that they all lie within some specific region of $\mathbb{R}^2$. This is in 
contrast to Alpern's original formulation, where the search region was 
known{\footnote{Note that, given RP-3, any finite number of 
points can almost surely 
merge, even if the visibility graph $G$ consists entirely of 
isolated vertices and there is no a priori knowledge of a search region. 
For example, the points could perform independent Brownian motions, and
since, as is well-known (see \cite{Du} for example), 
the trajectory of a Brownian motion in 
$\mathbb{R}^2$ is almost surely dense, any finite number of points 
will most surely be able to merge in this manner.
However, as is also well-known, the expected running time for this
procedure will always be infinite, even in the case of two points whose
initial separation is any positive constant greater than the visibility range
$\varepsilon$. Also note that the procedure will not work at all in
higher dimensions, as Brownian motions are then no longer almost surely
dense.}}. 
\\
\emph{RP-5:} Robots do not communicate explicitly with each other, even when 
they are neighbors. Each robot is equipped with a sensor which allows it to 
determine the exact locations of all its neighbors at any instant. This is the 
only input it has from its neighbors when deciding how to move.
\\
\par
Under these assumptions, one seeks a deterministic algorithm which guarantees 
merging. It is natural to seek an algorithm where the actions of the robots 
are \emph{synchronised}. What this means is that the algorithm should consist 
of \emph{rounds}. In each round, every robot notes the locations of its 
neighbors, performs some computation and determines a point to which it then 
moves, in a straight line, from its current location. All the robots execute 
these tasks before the next round begins. 
\par The classical solution to this problem was provided by Ando, Suzuki and 
Yamashita \cite{ASY}, see also \cite{AOSY}, 
whose algorithm utilises a well-known concept in 
computational geometry. Let 
$k$ be a positive integer and let $S$ be a set of $k$ points in the plane, 
say $S = \{\bm{p}_1,\dots, \bm{p}_k\}$, where $\bm{p}_i = (x_i,y_i)$. For each 
point $\bm{p} \in \mathbb{R}^2$, let
\begin{equation}
f(\bm{p}) := \max_{1 \leq i \leq k} ||\bm{p} - \bm{p}_i ||,
\end{equation}

where $|| \cdot ||$ denotes Euclidean distance. There is a unique 
point $\bm{p}_{S}$ at which the function $f$ attains its minimum value, namely 
the center of the unique
closed disc of minimum radius which includes all the points of $S$. The point 
$p_S$ is called the \emph{center of the smallest enclosing circle (CSEC)}, 
with respect to the points in $S$. In each round of the ASY-algorithm, a 
robot moves in the direction of the CSEC of the points in its viewing range 
(including itself), but not necessarily to exactly the CSEC because one must 
ensure that the visibility graph remains connected. Indeed, the main technical 
challenge for the algorithm is to specify how to satisfy this constraint. Note 
that, in the last round before merging, all the robots must be visible to one 
another and then they all compute the same CSEC and move to it. By definition 
of the CSEC, and given RP-2, this is the optimal solution to the 
multi-agent Rendezvous Problem when there is global visibility.
\par We found a number of papers which investigated the time complexity of
the ASY and similar rendezvous algorithms, see for example \cite{MBCF} and 
\cite{De}. Two things struck us about these analyses. Firstly, the complexity 
is defined in terms of the number of rounds. The time taken to execute 
computations within each round is not considered. Secondly, the analyses tend 
to focus on worst-case scenarios, in other words, the primary goal is to 
obtain an upper bound for the number of rounds which is valid for any 
connected visibility graph. Hence the bounds tend to be expressed in terms of 
the number of robots. A remark on page 2221 of \cite{MBCF} hints that, if one 
begins by assuming that the robots are confined to some fixed bounded region 
in $\mathbb{R}^2$, then for ``generic'' initial configurations, the size and 
shape of the region may be what really determine the time complexity rather 
than the number of points. This viewpoint means abandoning RP-4, of 
course, though note that it is less restrictive than in Alpern's original 
formulation, since one is not assuming that the initial search region is known 
to the agents, only that such a bounded region exists. 
We have not seen any paper which follows up on this 
idea, however. 
\\
\par
As already stated, we began thinking about Rendezvous-type problems well 
before we became aware of the existing literature on the 
subject{\footnote{Indeed, we didn't even know the term ``rendezvous problem'', 
and had instead invented our own term \emph{terminator problem}, in 
acknowledgement of our source of inspiration.}}. With hindsight, the point of 
view we took addresses the two issues raised above about existing 
time-complexity analyses. However, to do so and yet obtain the result we were 
looking for, we had to abandon RP-5, and allow our robots certain 
abilities to communicate with their neighbors. Furthermore, we had to foucs on
asymptotic results and randomised rather than fully deterministic algorithms.
We now describe our viewpoint.
\par The simplest kind of closed, bounded region in $\mathbb{R}^2$ is a disc, 
so suppose the points are initially confined to a disc of radius $n$. The disc 
radius is our primary parameter, and we will be interested in asymptotic 
results, hence in letting $n \rightarrow \infty$. We are also interested in 
``generic'' configurations, so we imagine that each point is placed uniformly 
at random in the disc, and independently of all other points. It is 
well-known - 
see \cite{P} for example - that if the number $N$ of points satisfies 
$N = \Omega(n^2 \log n)$, then the visibility graph will a.a.s. be connected, 
whereas if $N = o(n^2 \log n)$, then it will a.a.s. be disconnected. So we 
suppose the former holds. If we run the ASY-algorithm, then in the last step 
each robot will have to compute the CSEC for some set of $N$ points in the 
plane. Trivially, any procedure for doing this will require time $\Omega(N)$ 
to execute. Hence, if we include this time in our analysis, we have a trivial 
lower bound of $\Omega(n^2 \log n)$ for 
the time to rendezvous. Taking all the rounds of the procedure into account 
will make things even worse, though perhaps not significantly since
\par (a) during earlier rounds, each robot will have far fewer than $N$ 
neighbors, hence most of the computations are done only towards the end. Since 
the average density of points is $\Omega( \log n)$ initially, and a robot
moves distance $O(1)$ per round, what we can be 
certain of is that the extra contribution to the running time 
coming from earlier rounds is $\Omega(n \log n)$.
\par (b)  
%there was an oracle who had knowledge 
%of the locations of all $k$ points and infinite computing power. Then the 
%oracle could compute the CSEC, give its 
%coordinates to each point and instruct them to move in a straight line towards 
%it. Because of the assumption of fixed maximum speed, this would then be the 
%optimal solution in terms of the time taken to execute it. 
%In particular, 
%suppose the $k$ points are first placed independently and uniformly at random 
%in a disc of radius $n$, and that $k = \Omega(n^2)$, so that the average 
%density of points is bounded away from zero as $n \rightarrow \infty$. Then, 
%with the help of an oracle, they can merge in time $O(n)$.
it is 
a classical result that computation of a CSEC can indeed be solved 
in time linear in the 
number of points. Megiddo \cite{M} was the first to describe a deterministic 
linear-time algorithm. Later, Welzl \cite{W} developed an 
alternative procedure which is considered the state-of-the-art solution. His 
algorithm is probabilistic, with expected linear running time, but it is much 
simpler to describe and to implement that Megiddo's.
\par
To repeat, given that the initial visibility graph $G$ is a generic, 
connected geometric graph inside a disc of radius $n$, the actual running time 
of the ASY-algorithm under conditions RP 1-5 above is $\Omega (n^2 \log n)$. 
On the other hand, if the points somehow ``knew'' where to rendezvous, then 
RP-2 means they could do so in time 
$O(n)$. The goal of our study was to find a probabilistic algorithm which 
a.a.s. led to rendezvous in time $O(n)$. It seems clear, though we have not
proven it rigorously (see Section \ref{sect:discuss}), 
that this is impossible under RP-5. So we are 
forced to allow some kinds of communication between points, which is perhaps a 
more radical departure from the classical multi-agent Rendezvous Problem than 
some a priori assumption about the geometry of the point configuration. Part of
the goal then becomes to find an algorithm which contains as simple and as few 
as possible assumptions about how neighboring points can communicate. This is 
admittedly a subjective requirement, but as long as we can make all 
assumptions precise we will at least have a well-defined mathematical problem.
There is also a conceptual problem with, say, the ASY-algorithm as $n 
\rightarrow \infty$, namely that each robot would need unlimited capacity
to store the results of the computation of a CSEC. We will thus also
have a preference for algorithms which overcome this problem. 
\par Section \ref{sect:heart} is the heart of the paper. Having formalised our 
assumptions about how points can communicate we will prove our main result, 
Theorem \ref{thm:main}. Informally, it asserts that there is a randomised 
rendezvous algorithm which a.a.s. runs in time $O(n)$ provided the initial 
configuration of points is not \emph{too} dense. Significantly, our algorithm
consists of two main steps, the first and more difficult of which involves the
choice of a \emph{leader} amongst the points. Indeed, it may make more
sense to think of our algorithm as being for this purpose, rather than for the 
purpose of rendezvous. We will have a lot more to say about this matter in 
Section \ref{sect:discuss}, which contains a critical analysis of our 
alternative assumptions, plus suggestions for further developing the ideas of 
this paper. A sceptical reader may choose to only skim through Section 
\ref{sect:heart}, ignoring detailed proofs, before reading Section 
\ref{sect:discuss}.    

\section{Rendezvous in a disc with local communication}\label{sect:heart}

%{\sc Option 1:} Assume that each point somehow knows at the outset how many 
%points there are in total
%\\
%\\{\sc Option 2:} Seek efficient merging algorithms which succeed with 
%high probability under certain assumptions about the initial geometrical 
%distribution of points, which in itself is assumed to be random (but not 
%arbitrary).
%\\
%\\
%We think the second option is the more natural one and it is the one we will 
%focus on in forthcoming sections. There are still some interesting 
%observations one can make if one takes the former option, however, and we will 
%return to this later. Before going any further, let us formalise the 
%assumptions we have been alluding to so far about the capabilities of the 
%points.
We start with two lemmas which will be used in the proof of the main theorem
below.

\begin{lemma}\label{lem:graph}
Let $f: \mathbb{N} \rightarrow \mathbb{N}$ be an increasing function. For 
each $n \in \mathbb{N}$ suppose $f(n)$ points are placed uniformly
and independently in the interior of a disc $\mathcal{D}$ 
in $\mathbb{R}^2$ of radius $n$. 
Let $G = G_n$ be the geometric graph whose vertices are these points, and
with an edge between any two points the Euclidean distance between whom is
at most one. Set $\lambda_n := \frac{f(n)}{\pi n^2}$, the average density
of points in the disc. There is an absolute constant $C_1 > 0$ such that, if 
$\lambda_n > C_1 \log n$, then a.a.s.
\par (i) the degree of every vertex in $G$ lies between $\frac{\pi}{3} 
\lambda_n$ and $2\pi\lambda_n$,
\par (ii) the graph diameter of $G$ is at most $6n$.
\end{lemma}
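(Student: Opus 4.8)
The plan is to handle the two parts separately, both via standard concentration arguments for the Poisson-like counts that arise when throwing $f(n)$ independent uniform points into a disc of radius $n$. Throughout, write $\lambda = \lambda_n = f(n)/(\pi n^2)$ and assume $\lambda > C_1 \log n$ for a constant $C_1$ to be chosen large enough at the end.

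For part (i), fix a vertex $\bm{p}$ and consider the number $D(\bm{p})$ of other points lying in $B_1(\bm{p}) \cap \mathcal{D}$. Since $\bm{p}$ itself is uniform in $\mathcal{D}$, with probability $1$ it is not on the boundary, but it may be within distance $1$ of the boundary, in which case $B_1(\bm{p}) \cap \mathcal{D}$ has area strictly between $\pi/3$ (a rough lower bound for the area of a unit disc's intersection with a half-plane through a point on... actually the worst case is $\bm p$ on the boundary circle, where curvature only helps, so the intersection area is at least that of a unit half-disc, $\pi/2$; I will use the cruder bound $\pi/3$ to leave room) and $\pi$. Conditioned on the position of $\bm{p}$, $D(\bm{p})$ is binomially distributed with $f(n)-1$ trials and success probability $p = \mathrm{area}(B_1(\bm{p})\cap\mathcal{D})/(\pi n^2) \in [\tfrac{\pi/3}{\pi n^2}, \tfrac{1}{n^2}]$, so its mean lies in $[\tfrac{\pi}{3}\lambda(1-o(1)), \pi\lambda]$. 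By a Chernoff bound, $\mathbb{P}(|D(\bm{p}) - \mathbb{E}D(\bm{p})| > \tfrac12 \mathbb{E}D(\bm{p})) \leq 2\exp(-c\lambda)$ for an absolute $c>0$; choosing $C_1$ so that $c C_1 > 3$ makes this $o(n^{-3})$, and then if $\lambda \le n$ (so $f(n) = O(n^3)$, which we may assume; if not the statement is even easier) a union bound over the $f(n) = O(n^3)$ vertices gives the claim a.a.s., with the constants $\pi/3$ and $2\pi$ comfortably absorbing the $(1\pm\tfrac12)$ and the $1-o(1)$ boundary correction. One small point to be careful about: the events for different vertices are not independent, but the union bound does not need independence.

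For part (ii), the plan is a tessellation/grid argument. Cover $\mathcal{D}$ by a grid of axis-parallel squares of side $s = 1/\sqrt5$, so that any two points in the same square or in (edge- or corner-) adjacent squares are within Euclidean distance $\sqrt{2}\cdot\tfrac{2}{\sqrt5} < 1$ — wait, more carefully: two points in horizontally adjacent cells are at most $\sqrt{(2s)^2 + s^2} = s\sqrt5 = 1$ apart, so choosing $s$ slightly smaller, say $s = 1/3$, guarantees points in the same or adjacent (including diagonally adjacent) cells are neighbors in $G$. Each cell lying inside $\mathcal{D}$ has area $s^2$, hence expected number of points $s^2 \lambda = \Omega(\log n)$; the number of cells is $O(n^2)$, so by a Chernoff bound plus union bound, a.a.s. \emph{every} cell that is contained in $\mathcal{D}$ (or even every cell meeting the concentric disc of radius $n-2$) is nonempty. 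Now build an auxiliary graph $H$ on the nonempty cells, with edges between adjacent cells; $H$ contains the king-graph on an $\Theta(n)\times\Theta(n)$ grid of cells, which is connected and has diameter $O(n)$ — concretely, one can walk between any two cells in at most $3 \cdot 3n + O(1) \le 10n$ king-moves, say. A path of $k$ cells in $H$ yields a path in $G$ through one representative point per cell, of length $k$, and each off-cell detour to reach the actual endpoints costs $O(1)$ more edges. Choosing the constants so the total is at most $6n$ gives (ii). Here one has to take a little care at the boundary annulus of width $O(1)$, where cells are only partially inside the disc and may be empty; but any point in that annulus is within distance $O(1)$ (at most $2$, say, after adjusting $s$) of a fully-interior cell, which is nonempty a.a.s., so it attaches to the giant connected structure at bounded extra cost, and the diameter bound still holds with room to spare inside $6n$.

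The main obstacle is bookkeeping the constants: the area lower bound for $B_1(\bm p)\cap\mathcal D$ near the boundary in (i), and in (ii) simultaneously getting (a) a cell size small enough that same/adjacent cells are $G$-neighbors, (b) a cell area large enough that $s^2\lambda \gg \log n$ survives the union bound over $O(n^2)$ cells, and (c) a resulting king-graph diameter plus boundary corrections that fits under the stated $6n$. None of these is deep; the slack in the hypothesis $\lambda > C_1\log n$ (with $C_1$ chosen after the fact) and in the round constants $\pi/3$, $2\pi$, $6$ is exactly what absorbs the estimates, so the proof is a routine but slightly fiddly first-moment/Chernoff computation rather than anything requiring a new idea.
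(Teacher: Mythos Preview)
Your argument is correct and, for part (i), essentially identical to the paper's: binomial count of neighbors, Chernoff bound, union bound, with the area of $B_1(\bm p)\cap\mathcal D$ bounded below by $\pi/2$ (you use the looser $\pi/3$, which is fine). Your hedge about assuming $\lambda\le n$ is unnecessary, since the per-vertex failure probability is $e^{-c\lambda}$ rather than merely $n^{-cC_1}$, so the union bound over $f(n)=\pi n^2\lambda$ vertices works for all $\lambda$ once $C_1$ is large; but your ``otherwise even easier'' covers this.

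For part (ii) the paper takes a slightly cleaner variant of the same idea: rather than tessellating the whole disc and invoking a king-graph diameter bound, it fixes a pair $\bm p,\bm q$, draws a single piecewise-rectangular tube of width $\tfrac13$ and length at most $2n$ joining them inside $\mathcal D$, and partitions the tube into $\le 6n$ squares of side $\tfrac13$. Adjacent squares give $G$-edges, and a.a.s.\ no square is empty, so the graph distance is at most $6n$ directly. This sidesteps the two places where your global-grid version gets fiddly --- the boundary annulus and the fact that a straight king-path between two cells can leave the disc --- and delivers the constant $6$ without further bookkeeping. Your approach is perfectly sound, just a little more work to pin down the constant.
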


\begin{proof}
There is a rich literature on random geometric graphs - see \cite{P} - and the 
above statements certainly follow from well-known
(stronger) ones. We sketch proofs here for the sake of completeness. 
Choose $n$ and let $\bm{p}$ be any point in $\mathcal{D}$. 
The number of vertices of $G_n$ which are at distance
less than one from $\bm{p}$ is distributed as Bin$\left( f(n), a \lambda_n 
\right)$, where $a$ is the area of $B_{1}(\bm{p}) \cap \mathcal{D}$. Hence
$a \in \left( \frac{\pi}{2}, \pi\right]$, depending on how close
$\bm{p}$ is to the boundary of $\mathcal{D}$. 
It follows that the probability that the degree of any particular 
vertex of $G_n$ lies outside
the interval $\left( \frac{\pi}{3} \lambda_n, 2\lambda_n \right]$ 
is at most $e^{-C_2 \lambda_n}$, 
for some absolute constant $C_2 > 0$. Thus (i) follows by a simple union bound.
Indeed, note that (i) holds if we replace the factors $\frac{1}{3}$ and $2$
by any constants less than $\frac{1}{2}$ and greater than $1$ respectively. 
\par To prove (ii), consider two vertices $\bm{p}$ and $\bm{q}$ in $G_n$. 
No matter where they are located in $\mathcal{D}$, we can connect them by 
a piecewise rectangular tube of width $\frac{1}{3}$ and of length at most 
$2n$ (see Figure 1).
We can divide this tube into square sections of side-length $\frac{1}{3}$ 
and observe that
any two vertices located in adjacent sections are joined by an edge in 
$G_n$. Hence, if no section of the tube is empty, then the graph distance
from $\bm{p}$ to $\bm{q}$ is at most $6n$. 
The probability that any section is completely empty of vertices 
is at most $e^{-C_3 \lambda_n}$ for some absolute constant $C_3 > 0$. 
The claim of part (ii) follows by some simple union bounds. 
\end{proof}

\begin{figure*}[ht!]
  \includegraphics[]{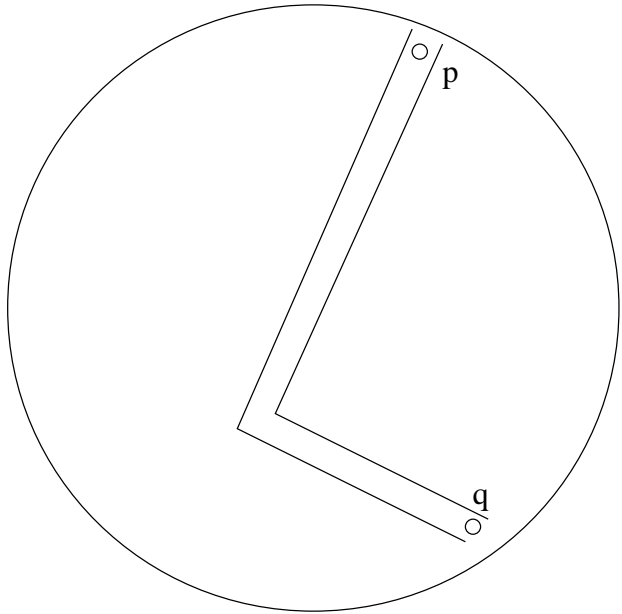} 
 \label{fig:ini}
\caption{A tube connecting two points in the disc.}
\end{figure*}

\begin{lemma}\label{lem:read} 
Let $n \in \mathbb{N}$ and let $k = k(n)$ be a positive
integer such that $k \rightarrow \infty$ as $n \rightarrow \infty$. 
Suppose $n$ binary strings $\bm{s}_1,\dots, \bm{s}_n$, each of length $k$,
are generated independently and uniformly at random. Consider a fixed 
string $\bm{s}_i$. Suppose it compares itself with each other string $\bm{s}_j$,
$j \neq i$,
by reading each string bit-by-bit from left to right until it encounters a bit
where the string differs from itself (if $\bm{s}_j = \bm{s}_i$ then the whole
of $\bm{s}_j$ is read). Let $X_i$ be the total number of bits read by 
$\bm{s}_i$ and let $X = \max_{i} X_i$. Then $X \lesssim 2n$.
\end{lemma}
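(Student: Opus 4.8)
The plan is to condition on a fixed string, dominate the number of bits read by a sum of i.i.d.\ geometric random variables, apply a Chernoff bound, and finish with a union bound over the $n$ starting strings.

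First I would fix an index $i$ and write $X_i = \sum_{j \neq i} Y_{ij}$, where $Y_{ij}$ is the number of bits of $\bm{s}_j$ read in its comparison with $\bm{s}_i$. If $T_{ij}$ denotes the position of the first bit at which $\bm{s}_j$ and $\bm{s}_i$ disagree (set $T_{ij} = \infty$ when the two strings coincide), then $Y_{ij} = \min(T_{ij}, k)$. Conditioning on the value of $\bm{s}_i$, the comparisons with the various $\bm{s}_j$ are independent, and for each $j$ the event that the $m$-th bit of $\bm{s}_j$ agrees with that of $\bm{s}_i$ has probability $\tfrac12$, independently over $m$ and over $j$ and irrespective of $\bm{s}_i$. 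Thus, conditionally on $\bm{s}_i$, the $T_{ij}$, $j \ne i$, are i.i.d.\ with $\mathbb{P}(T_{ij} = m) = 2^{-m}$, i.e.\ geometric with parameter $\tfrac12$.

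To remove the truncation at $k$, I would couple: extend each $\bm{s}_j$ to the right by an infinite string of fresh independent fair bits and let $\widetilde Y_{ij} \ge 1$ be the first-disagreement position of these extended strings. Then $\widetilde Y_{ij} \ge Y_{ij}$ pointwise, and, conditionally on $\bm{s}_i$, the $\widetilde Y_{ij}$ are i.i.d.\ geometric$(\tfrac12)$ with mean $2$ and moment generating function $\mathbb{E}\!\left[e^{\theta \widetilde Y_{ij}}\right] = e^{\theta}/(2 - e^{\theta})$ for $0 < \theta < \ln 2$; note this law does not depend on $\bm{s}_i$. Hence $X_i \le S_i := \sum_{j \neq i} \widetilde Y_{ij}$, a sum of $n-1$ i.i.d.\ geometric$(\tfrac12)$ variables with $\mathbb{E}[S_i] = 2(n-1)$, and since the conditional distribution of $S_i$ does not depend on $\bm{s}_i$, the bound on its upper tail holds unconditionally.

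Now fix $\epsilon_1 > 0$. The exponential Markov inequality gives, for $0 < \theta < \ln 2$,
\[
\mathbb{P}\big(S_i \ge (1+\epsilon_1)\,2n\big) \;\le\; \exp\!\left(-2\theta(1+\epsilon_1) n \;+\; (n-1)\big[\theta - \ln(2 - e^{\theta})\big]\right).
\]
Writing $g(\theta) = \theta - \ln(2 - e^\theta)$ for the log-moment-generating function, one has $g(0) = 0$ and $g'(0) = 2$, so $-2\theta(1+\epsilon_1) + g(\theta)$ vanishes at $\theta = 0$ with negative derivative $-2\epsilon_1$ there; choosing a small enough $\theta = \theta(\epsilon_1) \in (0,\ln 2)$ therefore makes this quantity equal to some $-2c < 0$, and the displayed probability is at most $e^{-cn}$ for $n$ large. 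A union bound over $i \in \{1,\dots,n\}$ then yields $\mathbb{P}\big(X \ge (1+\epsilon_1)\,2n\big) \le n\,e^{-cn} \to 0$, which is precisely the assertion $X \lesssim 2n$.

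The step needing the most care is the passage from $X_i$ to the geometric sum $S_i$: since $k = k(n)$ is allowed to grow arbitrarily quickly with $n$, a direct application of Hoeffding's inequality to $X_i$ using only $Y_{ij} \le k$ would produce a tail of order $\exp(-\Theta(n/k^2))$, far too weak to survive the factor $n$. Replacing the truncated summands by genuine geometric variables, whose tails decay exponentially at a rate independent of $k$, is exactly what makes the Chernoff estimate beat the union bound for every admissible $k$.
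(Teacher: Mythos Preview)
Your proof is correct and follows essentially the same strategy as the paper: establish an exponential tail bound $\mathbb{P}(X_i > (2+\varepsilon)n) \le e^{-c_\varepsilon n}$ for each fixed $i$, then take a union bound over $i$. The paper simply computes the first two moments of $X_{ij}$ and appeals to ``the Central Limit Theorem'' for the exponential tail (a slight abuse of terminology, since the CLT alone does not give exponential decay); your explicit Chernoff/MGF argument, via the coupling to untruncated geometric variables, is a cleaner and more rigorous route to the same inequality, and your remark on why a naive Hoeffding bound with range $k$ would fail is well taken.
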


\begin{proof}
$X_i = \sum_{j \neq i} X_{ij}$, where $X_{ij}$ is the number of bits of 
$\bm{s}_j$ that are read by $\bm{s}_i$. We have 
\begin{eqnarray}
\mathbb{E}[X_{ij}] = \sum_{t=1}^{k} t \cdot 2^{-t} \sim 2, \\
\mathbb{E}[X_{ij}^{2}] = \sum_{t=1}^{k} t^2 \cdot 2^{-t} \sim 6.
\end{eqnarray}
Hence $\mathbb{E}[X_i] \sim 2n$ and since, for a fixed $i$, 
the $X_{ij}$ are independent, 
Var$(X_i) \sim 2n$ also. It follows from the Central Limit Theorem that,
for any $\varepsilon > 0$, there exists $c_{\varepsilon} > 0$ such that
$\mathbb{P}(X_i > (2+\varepsilon)n) < e^{-c_{\epsilon} n}$. The lemma then follows
from a union bound.  
\end{proof}

In order to be able to prove a rigorous mathematical result, Theorem 
\ref{thm:main} below,
we need to 
specify precisely our assumptions about the
capabilities of the points that are tasked with merging. A reader
may variously complain that the following list of 11 assumptions is either
too long, too ad hoc or dubious from the point of view of physics. We accept 
such criticisms as valid, but nevertheless think the thought
experiment we are engaged in is worth pursuing, for two
reasons. On the one hand, we think our result is \emph{mathematically}
interesting since, as will become clear from the proof of Theorem 
\ref{thm:main}, our
algorithm very much relies on the geometry of the point 
configuration. On the other hand, none of our 11 assumptions seems
completely ridiculous from an anthropomorphic standpoint, that is, if we
imagine our ``points'' as being closer to intelligent human agents rather than
much more primitive robots. The fact that our algorithm involves choosing a 
leader gives greater justification for this anthropomorphic viewpoint. One
suggestion to a potential reader is to skip the list of assumptions and
go directly to Theorem \ref{thm:main} 
and its proof, returning to the list only if one 
encounters anything in our algorithm that seems completely implausible. 
Otherwise, we shall return to this discussion in Section \ref{sect:discuss}. 
\\
\par 
Henceforth, we assume a fixed choice of length and time units.
All implicit constant factors in the following list, and indeed throughout the
remainder of Section \ref{sect:heart}, are positive and 
absolute. 
We consider the first five assumptions to be minimal requirements if we wish 
to adhere to the spirit of the classical multi-agent RP, as expressed by
RP 1-5, but allow local communication in principle.  
\\
\\
{\em Assumption 1:}\label{ass:one} Each point 
can move at speed at most one. That this constant is the same for all 
points corresponds, intuitively, to the assumption that all the points are 
identical, and in particular have the same ``mass''. Consequently, we could 
further assume that if, at some stage of the merging process, two points 
became stuck together, then this new ``heavier point'' could subsequently only 
move at some maximum speed less than one. However, since in our algorithm 
a leader is first chosen and then all the other points move to it, 
this last issue is not relevant. See Questions \ref{quest:twostep} and
\ref{quest:intermediate} below, however. 
%However, it is clear that 
%any merging algorithm can be easily tweaked so that all the points 
%merge simoultaneously at the end, while not significantly affecting the 
%total running time.
\\
\\
{\em Assumption 2:} Each point can only directly communicate (by 
transmission or reception of electromagnetic or chemical signals, say, though
the precise details don't matter) with other 
points that are within a distance one of itself. As in Section 
\ref{sect:intro}, two points which can communicate directly will be called 
neighbors. Henceforth, what was referred to in Section \ref{sect:intro} as
the visibility graph will now be called the \emph{communication graph}. 
\\
\\
{\em Assumption 3:} All signals travel at a fixed speed, which we 
can think of as being determined by the laws of physics (and maybe chemistry).
\\
\\
{\em Assumption 4:} If a point broadcasts a signal, then any neighbor
which receives it can locate, in time $O(1)$, 
the exact point source of the broadcast.
\\
\\  
{\em Assumption 5:} Each point is \emph{immortal} 
in the sense that it can carry 
on the various activities desired of it at the 
same constant rate indefinitely. Note that this assumption is necessary to 
even make sense of any asymptotic result in which the diameter of the initial 
configuration tends to infinity. Here it is natural to think of the points as 
representing biological agents, who can ``eat'' to replenish their energy 
stores. 
\\
\par
Our next two assumptions regard the types of signals that a point 
can broadcast and how they are processed. 
We assume there are two types of signals.  
\\
\\
{\em Assumption 6:} Each point can broadcast simoultaneously to all its
neighbors. There are a bounded, if perhaps large, number of such signals, 
which we will think of as being different 
\emph{colors}. Any color signal can be generated,
turned off
or recognised in time $O(1)$.  
In addition, we assume that every point can both 
\emph{isolate} and \emph{filter} received 
color signals. By isolating we mean that, 
if a point is receiving signals in the same color
from multiple sources, then it can identify and locate individual sources
at a rate of $\Omega(1)$ sites per time unit. By filtering we mean that, 
if some set $\mathcal{C}$ of colors are presently being broadcast
amongst a point's neighbors, but it is only interested in some subset 
$\mathcal{C}^{\prime}$ of colors, then in time $O(1)$ it can ``put on an 
appropriate pair of goggles'' and henceforth scan only for colors 
in $\mathcal{C}^{\prime}$, and not
be disturbed in any way by interference from colors in $\mathcal{C}
\backslash \mathcal{C}^{\prime}$.   
\\
\\
{\em Assumption 7:} Each point can also generate single bits and 
send them to individual neighbors. A bit can be generated, an
individual neighbor identified, and the bit sent to the
neighbor all in time $O(1)$. The receiving point can process the bit and
(see Assumption 4) identify its exact source in time $O(1)$.

\begin{rem}\label{rem:sixpower}
Assumption 6, together with the last sentence of 
Assumption 11 below, 
is a powerful tool. To get a feeling for this, suppose all the 
points are initially inside an unspecified disc of diameter one, so we have
global visibility but no universal point of
reference. Then, for $N$ points, a.a.s. they could rendezvous
in time $O(\log N)$. For example, suppose each point is red by default. 
Each point can start generating random bits, and turn blue as soon as it
generates a zero. If, at some step, all the remaining red points turn blue, 
then these ``last men standing'' go back to red and try again. 
It is clear that, almost surely, within $O(\log N)$ rounds there will be 
a time at which exactly one point is red. 
This point then becomes the \emph{leader}. Since it
is visible to all other points, they can all merge at its location. By
comparison, if we only assumed RP 1-5, then the computation of a CSEC would
require time $\Omega(N)$, and that is only if we ignore
the problem of storage capacity.
\end{rem}    

Finally, we make explicit four additional assumptions which our algorithm
will exploit. They concern the abilities of the points to perform certain tasks
or manouevers. Note that Assumption 9 will only be used to deal with the
fact that each point has a finite storage capacity - see Remark
\ref{rem:storage} below. 
%Recall
%from Section 1 that we do not assume the points to possesses compasses, i.e.:
%a means of identifying a universal direction in the plane. On the other hand,
%we do assume
\\
\\
%{\em Assumption 8:} All the points are in possession of a universal clock, 
%which allows them to synchronise their activities. In fact, this
%assumption can be relaxed to each point being in possession of a ``pretty
%reliable clock''. Synchronisation will only be used in the first step of
%our algorithm (Step 1A below), and we will show that there is a universal
%constant $C_4$ such that a.a.s. every point completes this step within
%time $C_4$, given all the other assumptions. Our algorithm only requires
%every point to complete Step 1A before anyone begins Step 1B, hence it 
%suffices to have 
{\em Assumption 8:} If one point receives a color 
signal from another and then the sender, after waiting time $O(1)$ so
that the receiver locates it, starts to move while still
transmitting the color, then the receiver can
\emph{track} its movements in real time, and hence follow it if necessary
in such a way that the vector separation 
between the two points remains constant. If the sender remains stationary, then 
the receiver can move toward it in a straight line at speed one. 
\\
\\
{\em Assumption 9:} Given three points $\bm{p}_1$, $\bm{p}_2$, $\bm{p}_3$ which
are pairwise mutual neighbors, $\bm{p}_1$ can \emph{point out} $\bm{p}_3$ 
to $\bm{p}_2$ and $\bm{p}_2$ can process this information, all within time 
$O(1)$. We can imagine the actual ``pointing'' being done by, for example, 
$\bm{p}_1$ shining a laser at $\bm{p}_3$. A more ``low-tech'' solution would be
for $\bm{p}_1$ to walk toward $\bm{p}_3$, do a little dance around it and then
walk back to where it came from. This would work given Assumption 8, and
also assuming each point can leave a \emph{beacon} (which transmits a 
color reserved for beacons), so that it can return to its exact 
starting point after a walk within a radius of one.   
\\
\\
{\em Assumption 10:} Given any fixed $\varepsilon \in (0, 1]$, a point can 
\emph{sweep}
its $\varepsilon$-neighborhood in such a way that it identifies the
individual 
points in it at a rate of $\Omega(1)$ points per time unit and does not
miss any points. In fact, our
algorithm will only require this capacity for some finite 
number of predetermined values of $\varepsilon$, namely $\varepsilon 
\in \left\{\frac{1}{10C_{15}}, \frac{1}{2}, 1\right\}$, 
where $C_{15}$ is an absolute constant that will
appear in the proof of Theorem \ref{thm:main}. 
Note that, what we're assuming here is more than just the ability
to determine, in time $O(1)$, whether 
an identified neighbor is within a distance $\varepsilon$ or not. Rather,
we are assuming that all neighbors at distance greater than $\varepsilon$ can
be blotted out as the scan is performed, perhaps by tuning a receiver so
that it blocks out signals which are weaker than a certain threshold.
We're also assuming that a point can
physically perform a sweep, for example by rotating on an axis. 
\\
\\
{\em Assumption 11:} Following on from the previous assumption, given 
a point $\bm{p}$, a value of $\varepsilon$ and any ninety-degree sector of
$B_{\varepsilon}(\bm{p})$, the point can determine in time
$O(1)$ whether that sector is empty of neighbors or not. Also, given a color,
it can determine in time $O(1)$ whether or not anyone inside 
$B_{\varepsilon}(\bm{p})$ is currently broadcasting
that color. 
\\
\par
This completes our list of asssumptions. One thing that might strike the 
reader as surprising is that we do not require the points to 
possess clocks, for example for the purpose of synchronising their actions.
In fact, as we shall see, 
our algorithm only requires a very rudimentary synchronisation
which can be achieved without perfect clocks, given the above assumptions.
We are ready to state our main result. 
%We can think of a number of objections which a reader may raise to
%its hypotheses, 
%and will return to this in Section 3. But for the rest of
%this section, we shall say no more about them.

\begin{theorem}\label{thm:main}
There is a randomised algorithm $\mathcal{A}$ for which the following
holds: There are absolute positive constants 
$C_4, C_5$ such that, if $f: \mathbb{N} \rightarrow \mathbb{N}$ is a function 
satisfying $C_4 n^2 \log n < f(n) = o(n^3)$ then,
under Assumptions 1-11 above, 
if $f(n)$ points are
placed uniformly and independently at random in the interior of a closed
disc $\mathcal{D} = \mathcal{D}_n$ of radius $n$ in $\mathbb{R}^2$ 
and proceed to execute the 
algorithm $\mathcal{A}$, they will a.a.s. merge in time at most $C_5 n$. 
\end{theorem}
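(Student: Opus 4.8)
The plan is to split the algorithm $\mathcal{A}$ into two phases, exactly as indicated in the introduction. \textbf{Phase 1: leader election.} The first step is to use the colour-broadcasting machinery of Assumptions 6 and 11 to run a distributed tournament that, a.a.s., leaves exactly one point alive (the \emph{leader}) after $O(n)$ time. The idea, generalising Remark \ref{rem:sixpower} to a graph of diameter $O(n)$ (Lemma \ref{lem:graph}(ii)), is: every point generates a random binary string, one bit at a time; neighbours compare their strings bit-by-bit using single-bit exchanges (Assumption 7), and a point ``drops out'' (changes colour) as soon as it detects a neighbour whose string is lexicographically larger. The surviving point then floods a ``leader'' colour through the communication graph so that every point learns the election is over. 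By Lemma \ref{lem:read}, for a fixed vertex the total number of bits it ever reads while comparing with its $O(\lambda_n)$ neighbours is, with overwhelming probability, $O(\lambda_n)$, and a union bound over all $f(n) = o(n^3)$ points — which is where the upper bound on $f(n)$ enters, since the tail bound $e^{-c\lambda_n}$ with $\lambda_n \gtrsim \log n$ must beat $f(n)$ — gives that \emph{every} point finishes its local comparisons in time $O(\lambda_n) = O(n)$ (here we also use $\lambda_n = o(n)$, i.e. $f(n) = o(n^3)$, so the per-point work is genuinely $O(n)$). Propagation of the final verdict through a graph of diameter $6n$ costs a further $O(n)$ time by Assumption 3.

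\textbf{Phase 2: convergecast to the leader.} Once a leader $\ell$ is fixed, the remaining points must funnel to it. Here the plan is to build, implicitly and in $O(n)$ time, a breadth-first spanning tree of the communication graph rooted at $\ell$: the leader broadcasts a distinguished colour, each point at graph-distance $d$ adopts a ``layer-$d$'' marker when it first perceives the wavefront, and uses Assumptions 9--11 to choose a \emph{parent} in the previous layer. By Lemma \ref{lem:graph}(ii) the tree has depth at most $6n$. Each non-leader point then physically moves to, and attaches itself to, its parent (Assumption 8 lets a point track and follow a moving target keeping the separation vector fixed, so a whole subtree can migrate rigidly behind its root). Doing this layer by layer from the leaves inward, and noting that each point travels a net distance $O(1)$ per layer-step while the number of layers is $O(n)$, every point reaches $\ell$ after $O(n)$ additional time. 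The storage issue — a point cannot remember the identities of arbitrarily many points — is handled via Assumption 9 (the ``point out / little dance'' manoeuvre with beacons), which lets a point hand off children without ever storing more than $O(1)$ addresses; this is the role flagged for Assumption 9 and Remark \ref{rem:storage}.

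\textbf{Bookkeeping and synchronisation.} A recurring technical point is that the points have no clocks, so the two phases, and the layer-by-layer sweeps within each phase, must be triggered by signals rather than by timing. I would handle this with reserved colours that act as ``phase markers'': a point only begins Phase 2 behaviour once it has seen the leader-colour wavefront, and within Phase 2 a point only begins moving once its entire subtree has reported readiness (a convergecast of a single bit up the tree, again $O(n)$ time). One must also verify that the communication graph stays connected throughout — this is automatic in Phase 1 (nobody moves) and in Phase 2 follows because subtrees move rigidly behind their roots, so every parent-child edge is preserved by Assumption 8. Summing the contributions, $O(n) + O(n) + O(n) = O(n)$, gives the claimed bound $C_5 n$ with a.a.s. guarantees, the failure probability being dominated by the union bounds in Lemmas \ref{lem:graph} and \ref{lem:read}.

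\textbf{Main obstacle.} The genuinely delicate part is Phase 1: proving that the randomised lexicographic tournament a.a.s. produces a \emph{unique} survivor in $O(n)$ time on a geometric graph of diameter $\Theta(n)$ with $o(n^3)$ vertices, while simultaneously controlling the \emph{total} (not just expected) per-point communication cost. Two points at large graph-distance whose strings agree on a long prefix will keep reading bits from intermediate points, so one must argue that, a.a.s., string prefixes of length $O(\log n)$ already suffice to break all ties along every path — which again is why $f(n) = o(n^3)$ (and more precisely $f(n)$ polynomially bounded) is needed — and then feed this into Lemma \ref{lem:read}'s union bound. Making the informal ``last men standing try again'' recursion of Remark \ref{rem:sixpower} rigorous in the distributed, clockless setting, and bounding its depth by $O(1)$ rounds a.a.s., is where most of the work will go.
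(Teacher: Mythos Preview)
Your Phase 2 is close to the paper's Step 2 (the paper simply has the leader broadcast a reserved colour which is relayed and followed, without building an explicit BFS tree), and would work. The genuine gap is in Phase 1.

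As you describe it, each point compares its random string only with its \emph{neighbours} and drops out upon finding a larger one. But then the survivors are exactly the local maxima of the random labelling, which form an independent set of expected size $\Theta(f(n)/\lambda_n)=\Theta(n^2)$, not a single point. After one such round the survivors are pairwise non-adjacent, so a second round of neighbour comparison does nothing; the ``try again'' recursion of Remark~\ref{rem:sixpower} relied on \emph{global} visibility and does not transfer. You gesture at comparisons ``along every path'' via intermediate points, but flooding the current maximum string through a graph of diameter $\Theta(n)$ costs $\Theta(n)$ hops times $\Theta(\log f(n))$ bits per hop, which is $\Theta(n\log n)$, not $O(n)$. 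Lemma~\ref{lem:read} as you invoke it only bounds the work of comparing with one's $O(\lambda_n)$ immediate neighbours; it says nothing about the global election.

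What the paper actually does is the idea you are missing: it exploits the geometry of the disc. In Step 1A each point tests in $O(1)$ time whether it lies near $\delta\mathcal{D}$ (by checking for an empty quadrant in its $\frac12$-ball) and, by Claim~\ref{clm:boundsize}, a.a.s.\ only $\Theta(n)$ points pass this test. These ``blue'' points are essentially arranged on a one-dimensional cycle of length $O(n)$. Each blue point then generates $\Omega(\log n)$ random bits (one per neighbour in a small sub-ball; this is where $f(n)=o(n^3)$ enters, to make the neighbour count $o(n)$), stores two copies of its string in nearby helper points, and sends a mobile copy \emph{physically walking around the boundary}, comparing against every other blue point's stationary copy. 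Lemma~\ref{lem:read} is applied with $n$ equal to the $\Theta(n)$ blue points on the cycle, giving $O(n)$ total bits read during the walk, hence $O(n)$ walk time. The reduction from $f(n)$ candidates to $\Theta(n)$ candidates on a one-dimensional structure is the heart of the argument, and your proposal has no substitute for it.
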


\begin{proof}
Throughout the proof there will appear a sequence of absolute, positive
constants which will be denoted by $C_i$, for $i = 6,7,\dots, 16$. Some of these
constants will be related to one another, as well as to the constants 
$C_4$, $C_5$ appearing in the statement of the theorem. We will not bother 
with making these dependencies explicit, however.
\par 
As in Lemma \ref{lem:graph} we will denote the average point density by 
$\lambda_n$, i.e.: $\lambda_n = \frac{f(n)}{\pi n^2}$. We will denote the 
boundary of the disc $\mathcal{D}$ by $\delta \mathcal{D}$. 
\par Our algorithm $\mathcal{A}$ will consist of two main steps:
\\
\\
{\sc Step 1:} Choose a \emph{leader}.
\\
{\sc Step 2:} Everyone move to the leader's location. 
\\
\par
More precisely, the idea is that, in Step 1, the $f(n)$ points should perform
a sequence of operations at the end of which, a.a.s. exactly one
point will be 
in possession of information which identifies it as ``leader'', whereas
every other point will possess information which allows them to 
rule themselves out as leader. In fact, Step 2 can begin
once one point believes itself to be the leader, the crucial thing being
that a.a.s. no other point will subsequently reach the same conclusion and 
muddy the waters. Step 1 is the trickier part of the 
algorithm and we will go into the details below. 
\par In Step 2, the leader signals
its identity to all its neighbors by broadcasting the color red, this being
the color reserved specifically for the signal ``I am the leader'' 
(see Assumption 6). 
Once a point receives a red signal and identifies its source, 
it broadcasts red in turn to all its 
neighbors and then moves towards the location from which it received the 
signal. If
it received a red signal from multiple sources, it chooses one of these at 
random and follows it. Our various assumptions (in particular, Assumption 8) 
guarantee that Step 2 will result in 
all $f(n)$ points merging at the leader's location in time $O(n)$, if and
only if the original red broadcast reaches all $f(n)$ points in time $O(n)$. 
It follows from Lemma \ref{lem:graph}(ii) that this will a.a.s. be the case, 
provided $f(n) > C_6 n^2 \log n$ for a sufficiently large $C_6$.
Note, in particular, that Step 2 works for arbitrarily dense configurations
of points, the upper bound on $f(n)$ in Theorem \ref{thm:main} is not needed. 
\\
\par
It thus remains to describe the implementation of Step 1. 
%Here we will also
%use the fact that signals can a.s. be transmitted globally in time $O(n)$. 
There are two crucial ingredients and both rely on
the upper bound $f(n) = o(n^3)$. 
\par The first ingredient is that, if
$f(n) = o(n^3)$ then, by Lemma \ref{lem:graph}(i), a.a.s. every point has 
$o(n)$ neighbors. Thus, by Assumption 10, every point can count its
neighbors 
in time $o(n)$. There is one subtlety here, however.  
Recall from Section \ref{sect:intro} that we prefer to think of 
each point as possessing 
a finite memory capacity. Hence, for $n \gg 0$, it will in general
not be able to store internally the result of a count of its
neighbors. We'll present our solution to this problem further down, but it
makes our algorithm a good deal more clunky than it would be otherwise. 
See also Remark \ref{rem:storage}. 
\par The second crucial ingredient is that the ``boundary'' of the 
communication 
graph will a.a.s. contain $O(n)$ points. We need a precise
definition:

\begin{defi}\label{def:boundary} 
Let $G$ be the communication graph of a collection of
points distributed in a disc. Let $v$ be one of these points. If $v$ has
a set of neighbors $v_1,\dots, v_k$ such that, setting $v_{k+1} := v_1$, 
\par (i) $v_i$ and $v_{i+1}$ are also neighbors, for each $i = 1,\dots, k$,
\par (ii) reading clockwise, $v_1,\dots, v_k$ are the vertices of a simple
 polygon in the plane which encloses $v$,
\\
then $v$ is said to be a \emph{non-boundary point} of $G$. If no such 
set of neighbors of $v$ exists, we say it is a \emph{boundary point} of $G$.
We denote by $\delta G$ the set of boundary points of $G$.  
\end{defi}

\begin{claim}\label{clm:boundsize} 
Suppose $f(n)$ satisfies the assumptions of Theorem 
\ref{thm:main}. Then
\par (i) there are absolute positive constants $C_7, C_8$ such that, a.a.s.,
$C_7 n \leq |\delta G| \lesssim C_8 n$. 
\par (ii) a.a.s. every point in $\delta G$ is at distance at most $C_{9}
\frac{\log n}{\lambda_n}$ from
$\delta \mathcal{D}$, where $C_9$ is another absolute constant.
%and hence in particular at distance at most
%$1/100$ from $\delta \mathcal{D}$, provided $C_3 \gg 0$.  
\end{claim}

{\em Proof of Claim.} These kinds of statements may also follow from
``well-known facts'' about geometric graphs, but we choose to give
complete proofs. 
\par For a given $n$, let $\mathcal{D}^1$ denote the closed 
disc with the
same center as $\mathcal{D}$, but of radius $n - \frac{1}{n}$. Set
$\mathcal{D}^2 := \mathcal{D} \backslash \mathcal{D}^1$ and, for
each $i = 1, 2$, let $\delta_i G := \delta G \cap \mathcal{D}^i$. 
\par Let $\bm{p}$ be a point of $\mathcal{D}$, and let $r$
denote its distance from $\delta \mathcal{D}$. Let 
$\bm{o}$ denote the center of $\mathcal{D}$ and let $\bm{p}_1 \bm{p}_2$ be the
chord through $\bm{p}$ which is perpendicular to the line through
$\bm{o}$ and $\bm{p}$ (see Figure 2 on page 10). It is easy to check that 
$||\bm{p} \bm{p}_i|| \geq \frac{1}{2}$, for $i = 1, 2$, 
whenever $r > \frac{1}{n}$ and $n \gg 0$. 

\begin{figure*}[ht!]
  \includegraphics[]{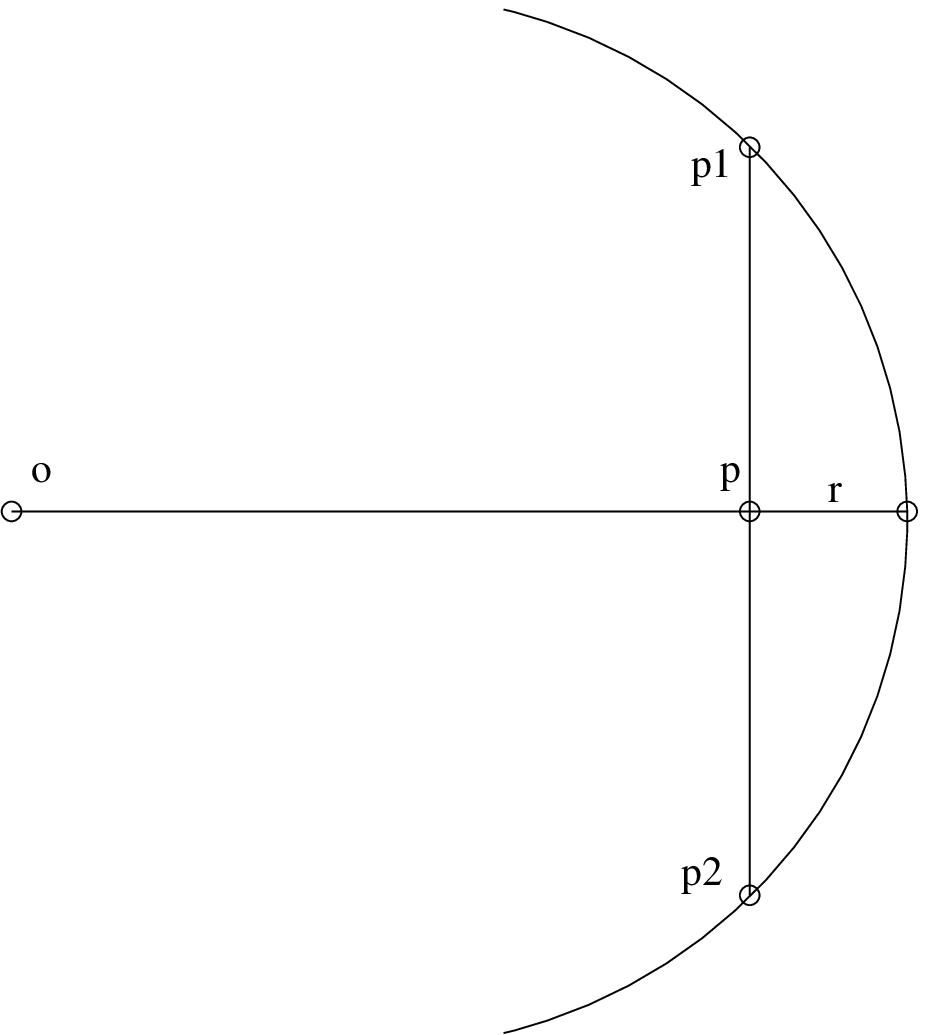} 
 \label{fig:ini}
\caption{}
\end{figure*}

\par First consider points 
in $\mathcal{D}^2$. The total number of 
such points in the configuration is distributed as 
Bin$\left(f(n), \frac{a}{\pi n^2} \right)$, where $a$ is the area of 
$\mathcal{D}^2$, 
hence $a \sim 2\pi$. It follows that a.a.s. the number of such points is 
$o(n)$, and hence
that $|\delta_2 G| = o(n)$ also.     
\par Next consider a point $\bm{p}$ in the configuration at distance
$r > \frac{1}{n}$ 
from $\delta \mathcal{D}$. If $\bm{p}$ is to lie in $\delta G$, then
at least one of the regions $\mathcal{R}_i$, $i = 1,\dots, 4$, in Figure 3 
on page 11 must be empty. 
%\begin{figure*}[ht!]
%  \includegraphics[]{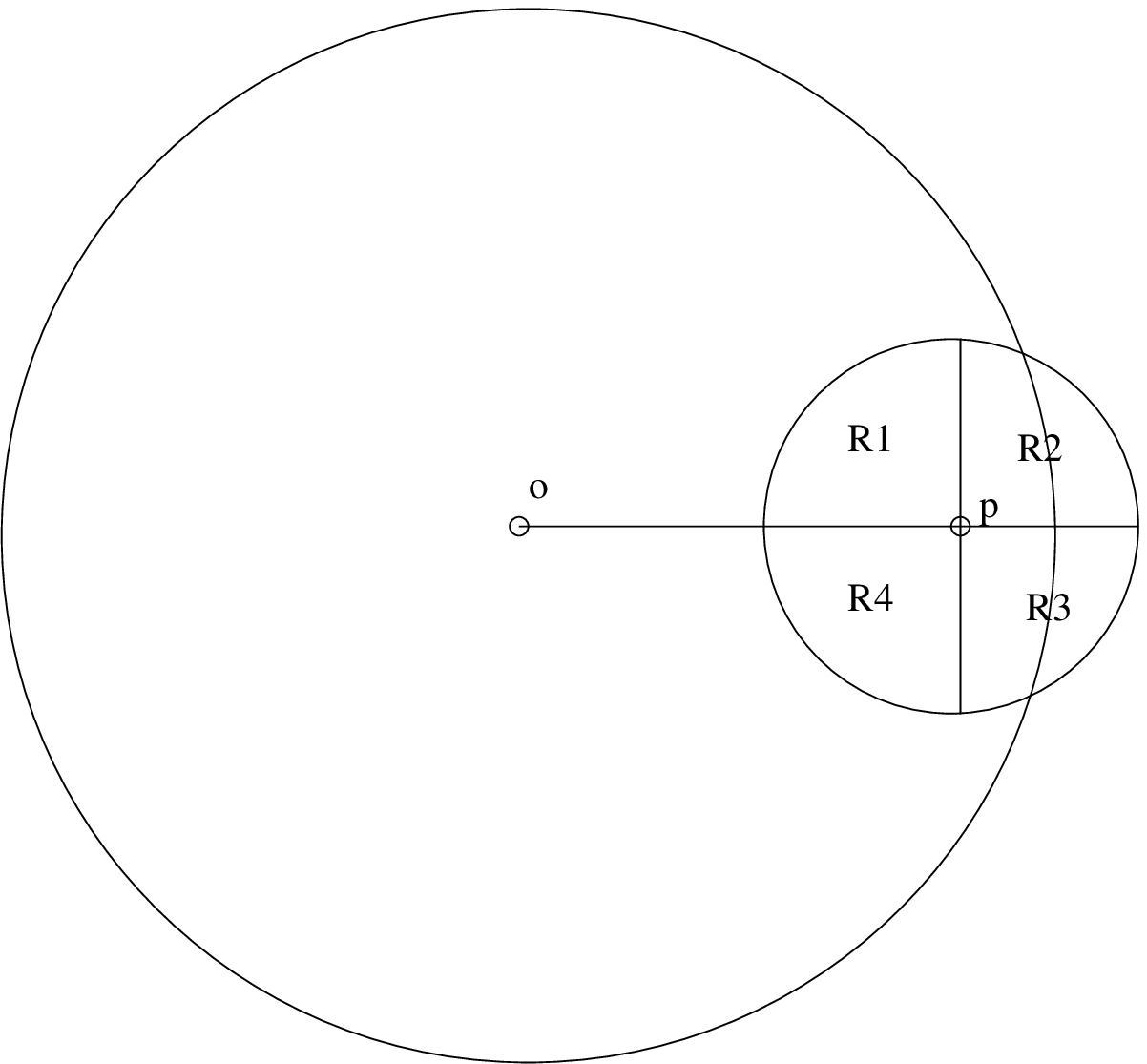} 
% \label{fig:ini}
%\caption{The diameter of the smaller disc equals one, so that any two points 
%inside it are neighbors.}
%\end{figure*}
By a simple union bound, the probability of this is at 
most 
$4 \cdot \exp (-C_{10} \lambda_n \cdot \max\left\{r, \frac{1}{2}\right\} )$, 
for some $C_{10} > 0$. Part (ii) of the 
claim now follows. Note that this immediatetely implies in turn the 
lower bound in part (i), since
$\delta G$ must be connected, as an induced subgraph of $G$. For the upper
bound in
part (i), let us consider the random variable $X$ which is the number
of points in the configuration 
at distance $r > \frac{1}{n}$ from $\delta \mathcal{D}$ for which
at least one of the four regions in Figure 3 is empty. By definition, 
$X$ stochastically dominates $|\delta_1 G|$, so it suffices to prove that
$X \lesssim C_{11} n$. For the first moment 
we immediately have an upper bound
\begin{equation}\label{eq:expect}
\mathbb{E}[X] \leq C_{12} \lambda_n \int_{0}^{n}
2\pi (n-r) e^{-C_{10} \lambda_n r} \; dr \leq C_{13} n.
\end{equation}

\begin{figure*}[ht!]
  \includegraphics[]{} 
 \label{fig:ini}
\caption{The diameter of the smaller disc equals one, so that any two points 
inside it are neighbors.}
\end{figure*}

We will now show that Var$(X) = O(n \log n)$, which will
suffice to complete the proof. We can write $X = \sum X_{\bm{p}}$, a sum
of indicator variables, one for each of the $f(n)$ points in the configuration.
Let $A_{\bm{p}}$ denote the event for which $X_{\bm{p}} = 1$, that is, 
dist$(\bm{p}, \delta \mathcal{D}) > \frac{1}{n}$ and at least one of the
four regions in Figure 3 is empty. There are $O(n^4 \lambda_{n}^{2})$ pairs
of distinct points $\{\bm{p}, \bm{q}\}$, so it suffices to prove that, 
for any pair of points, 
\begin{equation}\label{eq:small}
\mathbb{P}(A_{\bm{p}} \wedge A_{\bm{q}}) - \mathbb{P}(A_{\bm{p}}) \cdot 
\mathbb{P}(A_{\bm{q}}) = O \left( \frac{\log n}{\lambda_{n}^{2} n^{3}}
\right).
\end{equation}
First of all, let us define three auxiliary events $\mathcal{W}_i$, 
$i=1, 2, 3$. Let $\mathcal{W}_1$ be the event that the Euclidean 
distance between $\bm{p}$ and $\bm{q}$ is at least two. Let 
$\mathcal{W}_2$ be the event that at least one of $\bm{p}$ and $\bm{q}$ is at 
least $10C_{9} \frac{\log n}{\lambda_n}$ from $\delta \mathcal{D}$. 
Finally, let $\mathcal{W}_3$ be the complement of 
$\mathcal{W}_1 \cup \mathcal{W}_2$. A priori, we can decompose the 
left-hand side of (\ref{eq:small}) as 
\\
\begin{equation}\label{eq:decomp}
\left[ 
\mathbb{P}(A_{\bm{p}} \wedge A_{\bm{q}} | \mathcal{W}_1) \cdot 
\mathbb{P}(\mathcal{W}_1) - \mathbb{P}(A_{\bm{p}}) \cdot 
\mathbb{P}(A_{\bm{q}}) \right] + 
\mathbb{P}(A_{\bm{p}} \wedge A_{\bm{q}} | \mathcal{W}_2) \cdot 
\mathbb{P}(\mathcal{W}_2) + 
\mathbb{P}(A_{\bm{p}} \wedge A_{\bm{q}} | \mathcal{W}_3) \cdot 
\mathbb{P}(\mathcal{W}_3).
\end{equation}
\par The main idea here is that, if $\mathcal{W}_1$ occurs, 
then the events $A_{\bm{p}}$ and
$A_{\bm{q}}$ are almost negatively correlated. Intuitively, 
if say $A_{\bm{p}}$ occurs, what
we then know is that at least one of the four regions in Figure 3, for the
point $\bm{p}$, is empty. In other words, we just know that some region of 
$\mathcal{D}$ is empty, which must make it less likely that any disjoint
region is also empty. Since $\bm{q}$ is at distance greater than two from
$\bm{p}$, each of its four associated regions is disjoint from the 
corresponding regions for $\bm{p}$. There is one small flaw with this
reasoning, namely the knowledge that $\bm{q}$ is far away from $\bm{p}$
slightly increases the probability of the event $A_{\bm{p}}$ to begin with, and 
vice versa. Namely, we have already ruled out $\bm{q}$ as a close
neighbor of $\bm{p}$, and hence when calculating the probability of 
$A_{\bm{p}}$ we can imagine starting with a configuration of $f(n)-1$ 
rather than $f(n)$ points. In summary, 
\begin{equation}
\mathbb{P} (A_{\bm{p}} \wedge A_{\bm{q}} | \mathcal{W}_1 ) \leq 
\mathbb{P}(A^{\prime}_{\bm{p}}) \cdot \mathbb{P} (A^{\prime}_{\bm{q}}),
\end{equation}
where $A^{\prime}_{\bm{p}}$ is the same event as $A_{\bm{p}}$, but
calculated with respect to an intitial configuration of $f(n)-1$ points
(and similarly for $\bm{q}$). Clearly, 
\begin{equation}
\mathbb{P}(A^{\prime}_{\bm{p}}) \leq \mathbb{P}(A_{\bm{p}}) \cdot \left(
1 - \frac{a}{\pi n^2} \right)^{-1},
\end{equation}
where $a$ is area of any of the four regions in Figure 3, 
hence $a = \pi/4$. Hence,
\begin{equation}\label{eq:first}
\mathbb{P} (A_{\bm{p}} \wedge A_{\bm{q}} | \mathcal{W}_1 ) \leq 
\mathbb{P}(A_{\bm{p}}) \cdot \mathbb{P} (A_{\bm{q}}) \cdot 
\left( 1 + \frac{1}{2n^2} + O \left( \frac{1}{n^4} \right) \right).
\end{equation}
But recall that $\mathcal{W}_1$ is the event that the distance between 
$\bm{p}$ and $\bm{q}$ is at least two, so clearly 
$\mathbb{P}(\mathcal{W}_1) \leq 1 - \frac{2}{n^2}$ (the right constant is $2$
rather than $2^2 = 4$ because we have to consider points close to $\delta
\mathcal{D}$). It follows that the first square-bracketed term in
(\ref{eq:decomp}) will be negative for all $n \gg 0$, 
and it just remains to bound the 
positive contributions coming from
$\mathcal{W}_2$ or $\mathcal{W}_3$. The 
definition of $\mathcal{W}_2$ immediately implies, however, by the same
argument used to obtain part (ii) of the Claim, that it contributes 
negligibly. Hence, it remains to show that
\begin{equation}
\mathbb{P} (A_{\bm{p}} \wedge A_{\bm{q}} | \mathcal{W}_3) \cdot 
\mathbb{P}(\mathcal{W}_3)
= O \left( \frac{\log n}{n^3 \lambda_{n}^{2}} \right).
\end{equation} 
But $\mathcal{W}_3$ occurs if and only if $\bm{p}$ is placed in a strip of 
width $O \left(\frac{\log n}{\lambda_n}\right)$ inside 
the disc boundary and then $\bm{q}$ is subsequently also placed inside this
strip and at distance $O(1)$ from $\bm{q}$. Since the points are placed
independently and uniformly at random, 
\begin{equation}
\mathbb{P}(\mathcal{W}_3) = O \left( \frac{n \frac{\log n}{\lambda_n}}{n^2} 
\times \frac{\frac{\log n}{\lambda_n}}{n^2} \right) = 
O \left( \frac{\log^{2} n}{n^3 \lambda_{n}^{2}} \right).
\end{equation}
This already suffices to prove that Var$(X) = O (n \log^{2} n)$ and
hence to prove part (i) of the Claim. To get rid of another $\log n$ factor, 
it suffices to show that $\mathbb{P} (A_{\bm{p}} \wedge A_{\bm{q}} | \mathcal{W}_3)
= O \left( \frac{1}{\log n} \right)$, and hence to show that
$\mathbb{P} (A_{\bm{p}}) = O \left( \frac{1}{\log n} \right)$, 
conditioned on the assumption that $\bm{p}$ is at distance at most $10C_{9} 
\frac{\log n}{\lambda_n}$ from the disc boundary. But there are a.a.s. 
on the order of
$n \log n$ points in this annulus and we already know that there are a.a.s. 
order $n$ points in $\delta G$, so we are done.  
\\
\par 
We now return to the description of Step 1, which we break down into
three substeps:
\\
\\
\emph{Step 1A:} Each point performs, in time $O(1)$, 
some tests in an attempt to rule out that it belongs to $\delta G$. If all these
tests fail, then it turns blue to signal ``I believe I might belong to 
$\delta G$''. The idea here is that each point does something very similar
to checking the four regions as in Figure 3, and turns blue if and only if at 
least
one of those four regions is empty. A point cannot test exactly this condition,
since it does not know where the centre $\bm{o}$ 
of the disc is. However it can, 
for example, fix some large number $K$, rotate a half turn and,
at equal intervals of $\pi/K$, scan the four quadrants in its 
$\frac{1}{2}$-neighborhood, 
as seen from its current orientation. 
Each time it scans, it just 
wants to decide if each of the four quadrants is empty or not, and
Assumption 11 implies that this can be done in time $O(1)$. Hence all $K$ 
tests can be performed 
in time $O_{K}(1)$. It is clear that, for a sufficiently large but now 
fixed $K$,   
the set of blue points will have the properties
identified in Claim \ref{clm:boundsize}, that is, there will at least 
$C_{14}n$ and a.a.s. at most $C_{15}n$ blue points. Furthermore, for
$C_4 \gg 0$, 
the blue points will a.a.s. all lie within distance $\frac{1}{100 C_{15}}$, say, 
of the 
disc boundary.
\par As preparation for Steps 1B and 1C, the points which do not turn blue
also have to make a choice: 
%Henceforth, assume $C_3$ is sufficiently large
%so that a.a.s. all blur points lie within distance $\frac{1}{100 C_{14}}$
%of the disc boundary (see part (ii) of Claim \ref{clm:boundsize}). Then the non-blue
%points choose as follows:
\par (a) turn green if you can see at least one blue point at distance at most 
$\frac{1}{10 C_{15}}$ from yourself. Since $C_{15}$ is an 
absolute constant, this can be incorporated into the algorithm. 
\par (b) turn yellow if you see no blue point at distance less than 
$\frac{1}{10 C_{15}}$, 
but you do see a blue point at distance less than $\frac{1}{2}$,
\par (c) if you see no blue point at distance less than $\frac{1}{2}$, 
do nothing. 
\\
There is an issue of synchronisation here, but since there is an absolute
constant time in which every point can decide whether to turn blue or not, we
can also ensure that every point has decided blue/not blue before
the choices (a)-(c) are made, without needing perfectly reliable clocks. 
Assumptions 10 and 11 also 
guarantee that the latter choices
can all be made within an absolute constant time. Hence Step 1A runs in 
an absolute constant time.    
\\
\par 
To complete Step 1, the idea is that the leader should come from among the
points which are blue after Step 1A. Since there are a.a.s. $\Theta(n)$
such points, if each of them were to generate at least $C_{16} \log n$
random bits, independent of all the others, then for $C_{16}
\gg 0$, a.a.s. there would be a unique point which generates the
largest binary number. There are basically three problems to be solved in
order to turn this into an algorithm:
\par (i) how do the blue points know how many bits to generate ?
\par (ii) how do they store the bits ?
\par (iii) how do they check their numbers against those generated by 
all other blue points ?
\\
Step 1B will deal with (i) and (ii),
whereas Step 1C will deal with (iii). As already mentioned, problem 
(ii) could be
ignored if we allowed each point to possess unlimited memory, and to 
overcome this is where the algorithm becomes most technical. 
\\
\\
\emph{Step 1B:} Each blue point scans all its green neighbors. 
Lemma \ref{lem:graph} implies that, for $C_4 \gg 0$, a.a.s. every 
blue point will scan at least $\frac{C_{4}}{1000C_{15}} \log n$ (say) 
and at most 
$o(n)$ green neighbors. As explained in Assumption 10, 
a blue point can actually 
scan its surroundings in 
such a way that it doesn't miss any green points and identifies them 
at a rate of $\Omega(1)$ points per
second, and hence can complete the scan in time $o(n)$. Now, since
the total number of blue points is $\lesssim C_{15} n$, and $C_{15}$ is 
an \emph{absolute} constant, 
if every blue point generated as many random bits as it had
green neighbors, then a.a.s. there would be a unique largest binary number,
provided $C_4$ is sufficiently large.  
\par Each time a green
point is scanned, the blue scanner generates a random bit. These bits now
need to be stored somewhere, and since their number a.a.s. goes to infinity,
we cannot assume the blue point stores them in its own
internal memory. This is where the yellow points enter the picture.  
\par A blue point will choose points amongst its yellow neighbors to store two 
copies of the
random number it generates, which we call the S-copy (S for ``stationary'')
and the M-copy (M for ``mobile''). The reason for two copies and for our
curious terminology will become clear in Step 1C. For present purposes, the
basic idea is that, each time a green point is scanned then a random bit
is generated and two yellow points selected to store it. Each selected
yellow point receives two bits, the first is the random bit generated by 
blue, the second determines whether it belongs to the S-copy or 
the M-copy (say 0 for S and 1 for M). As soon as a point 
accepts a storage request then it turns purple, to indicate that it is 
no longer available for such requests. Thus each storage point belongs
unambiguously to either the S- or the M-copy of a unique number. There is
one more issue to deal with, namely the bits of (a copy of) a blue chief's
number must be stored in such a way that they form an unambiguous
binary number, which can be ``read from left to right'' 
unambiguously in Step 1C. This is achieved as follows. The first time
a blue point generates a random bit and finds two yellow points to store it, 
it also signals to them that they are its \emph{headmen}. The S-headman
turns brown and the W-headman turns orange. The headmen now also
reposition themselves so that 
\par (i) each remains within distance $\frac{1}{2}$ of its blue \emph{chief}
\par (ii) the line segment from the headman to the chief is approximately at 
right angles to the boundary of the disc and directed outwards from the disc
\par (iii) the W-headman is to the right of the S-headman, seen from the
center of the disc{\footnote{Actually, (ii) and (iii) are unnecessary 
requirements, they are merely of aesthetic value. We imagine the
points of each string lining up at right angles to the
disc boundary. This ``looks nice'', but will not be needed for the 
execution of Step 1C to follow. Indeed, all that's important is that
each point along the string knows where its two neighbors are, the shape of
the string as a whole is irrelevant.}}. 
\\
Since all the points in question are a.a.s. within distance $\frac{3}{5}$ 
of the
disc boundary, they can make good estimates of the various directions involved
and, together with Assumptions 8 and 9 in particular, 
it is clear that they can reposition themselves so
as to satisfy (i)-(iii) in time $O(1)$. Their blue chief tracks their
movements (Assumption 8) and so is aware of their exact resting locations. Now 
it is time to scan the second green neighbor, generate the second random bit 
and select two yellow points to store it. As well as storing their
respective bits, these two hitherto yellow (and now purple points) now
walk toward their chief and reposition themselves somewhere on the line 
segment between their chief and the appropriate headman. Using
Assumptions 8-9, this manoeuvre can be successfully accomplished 
in time $O(1)$. Once in position, these points take over the role of headman, 
whereas the previous headmen turn new colors to indicate that they are now 
\emph{tailmen}. In subsequent steps, the selected storage points reposition
themselves somehere on the line segment between their chief and the 
current appropriate headman. Once they have done so, they take over the
role of headman, and the previous headmen turn some neutral color
to indicate that they are now \emph{middlemen}. 
\par There is one final subtlety. Since the total number of blue points is
$\lesssim C_{15} n$, on average any yellow point will be able to see
no more than $\frac{C_{15}}{\pi}$ blue points. 
Putting it another way, on average no more
than $\frac{C_{15}}{\pi}$ 
blue points will be competing for the services of any yellow 
point as storage space. Since the total number of yellow points will be close 
to $(5 C_{15})^2$ times the total number of green ones, on average there
will be enough storage space to go round. 
However, we do not see how to rule out the possibility that
there may locally be much denser clusters of blue points. A priori, it
follows from Claim \ref{clm:boundsize} that up to $O(\log n)$ blue points 
may be visible from any yellow point. In that case, it can happen that a blue 
point runs out of storage space before it has scanned all its green
neighbors. If that is the case, we shall let the blue point ``drop out of
contention'', i.e.: it abandons the generation of its random 
number. It then 
turns pink to indicate that it is out of contention for leadership, 
and this information is relayed to all the current members of its S- and
W-strings, who subsequently revert to white (we let this be the 
default color of a point that isn't doing anything). Note that, because the
points in these strings lie on a straight line from their chief, they cannot
receive signals from the latter directly, so information has to be relayed
along the string. This can still be accomplished in time $o(n)$, 
however. 
\\
\\
\emph{Step 1C:} Once a blue point has completed Step 1B, and if it has not
turned pink, it signals to the headman of its M-string to
``start walking''. The idea here is that the M-string will now 
walk around the boundary of the disc such that the interior of the disc is on 
its right{\footnote{or its left, it doesn't matter as long as each headman 
maintains
a consistent orientation. Indeed, there is no problem if different headmen 
head in opposite directions.}}. 
The headman leads the way and the other points in the 
string follow (see Assumption 8). 
As it walks, it must follow the trail of blue and pink points, and
for each blue point it must consult the S-copy of that point's number and
compare it with its own. The two numbers are compared bit-by-bit (we
specify the exact protocol below) and the comparison is aborted once
a bit is found where the two numbers differ. If the S-number has
a 1 in this position, the M-headman immediately decides that its chief
is out of contention for leadership. 
It stops walking once it returns to its starting point, which 
it recognises by the fact that the S-string it just read 
is identical to its own (it knows when it's finished reading a 
string since it recognises a tailman by the particular color reserved
for tailmen). If it
returns without having encountered any S-number 
greater than its own, it signals
to its chief that it is leader. The 
following considerations ensure that this procedure a.a.s. 
achieves what we want. 
\par Firstly, because of Claim \ref{clm:boundsize}, a.a.s. 
all the chiefs are located very close to the
disc boundary and hence a wandering headman can make sure it doesn't
miss any blue or pink points 
while keeping the interior of the disc on its right. Secondly, the 
lower bound on $f(n)$ ensures that a.a.s. exactly one chief will be declared 
leader, provided $C_4 \gg 0$. Note that it doesn't matter if some chiefs 
already dropped out
by turning pink in Step 1B, since there will a.a.s. be 
$\Omega(n)$ of them left in the game. Thirdly, by
Lemma \ref{lem:read}, a.a.s. each M-string will
complete its walk in time $O(n)$, provided it does not does not waste
any time ``queueing''. This is the only subtle issue here. Since an M-string
may read anything from one to $\Omega(\lambda_n)$ 
bits of an S-string, there may be
many M-strings reading the same S-string simoultaneously. To avoid
queues developing, we need to have an appropriate reading
protocol. One solution is to have the M-headman read the S-bits one by one: 
it is no problem for it to follow the trail of S-bits, since they 
lie in a straight line. Alternatively, each S-bit could keep a laser 
shining on its successor to point it out, see Assumption 9. 
Simoultaneously, the M-headman 
requests bits from its own string to compare with. 
Imagine that each point in the string maintains two copies of its bit, one
of which is permanent and the other in short-term memory. The headman
requests bits from the first middleman. Whenever a middleman receives a 
request for a bit, it empties the contents of its short-term memory and then
requests the contents of the short-term memory of the 
next point along the string. 
In this way, bits can be passed along the M-string to the headman at a rate of
$\Omega(1)$ bits per time unit, who can then compare them 
one-by-one with the trail of S-bits. This protocol ensures that the S-strings 
are completely ``passive'' and hence can be read simoultaneously by 
arbitrarily many M-headmen without any queues developing. 
\par One final comment: In steps 1B and 1C we have not bothered about trying
to synchronise the activities of different actors. I think it is clear that
this is not a problem. A.a.s. steps 1B and 1C will, as described, lead to one
chief identifying itself as leader in time $O(n)$. Once it does so, it can
initiate Step 2 by turning red. 
Note that, during Step 1, some points in the configuration 
will have changed their locations. However, a.a.s. 
all these points were initially 
within distance $\frac{3}{5}$ of the disc boundary and hence their movements
will not affect the connectedness of the communication graph. Hence, a.a.s.
Step 2 will subsequently still succeed and the whole algorithm 
terminate in time $O(n)$.  
\end{proof}

\setcounter{theorem}{5}

\begin{rem}\label{rem:on3}
The above algorithm would also work if $f(n) \leq \kappa n^3$ and the constant 
$\kappa$ were somehow known in advance. When $f(n) = \Theta(n^3)$, then 
the blue
points will generate $\Theta(n)$ random bits, but this is
not a problem. 
If $\kappa$ is not specified, then the difficulty is rather that the
constant we denoted $C_{15}$ will now depend on $\kappa$, since it
depends on the size of $\delta G$ via $C_8$ and $C_{11}$. In fact, 
it is only the size of $\delta_{2} G$ that is problematic, as
this will now be $\Omega_{\kappa}(n)$, whereas we can see from
(\ref{eq:expect}) that $|\delta_{1} G|$ will a.a.s. remain bounded
by an absolute constant times $n$. 
\par As explained
in the proof, the constant $C_{15}$ 
determines the amount of competition, on average, 
for yellow storage space. To be sure there
is enough storage space to go round on average, the ratio of yellow to green
points must be sufficiently large, in other words the search radius
for green points must be sufficiently small, depending on $\kappa$. We 
do not see how to get around this problem. Amusingly, though, if the points
somehow knew that their average density was $\Theta(n)$, then they would
realise that generating $\Theta(n)$ random bits is extremely 
wasteful, since it would suffice with $\Omega(\log n)$ bits. The issue
of whether there is a better alternative than counting
neighbors for the task of estimating how many random bits need to be 
generated by a blue point will be taken up in Question \ref{quest:estdiam}
below.
\end{rem}

\begin{rem}\label{rem:storage}
If we allowed each point to possess unlimited storage capacity, then the 
description of Steps 1B and 1C could be simplified considerably. There
would be no need for the colors grren and yellow. Each 
blue point could just count all its neighbors, generate one random bit
per neighbor and store the entire binary number. 
It could then choose a headman from amongst its neighbors and write 
a copy of the number to the headman's memory. The headman would then walk 
around the boundary, comparing its number with that stored in each blue point.
It would a.a.s. know it has returned to its own blue chief when it
reads a number equal to its own. 
\par The fact that our algorithm does not utilise unlimited storage capacity 
per point seems important, however. Another interesting observation is that 
the kinds of computations done in Steps 1B and 1C are
very primitive, basically only counting and comparing strings. Of course, the
sophistication of the procedure lies in the execution of
``higher-level'' tasks such as scanning, pointing and following. These
are the kinds of abilities a sceptical reader might object to. 
\end{rem}
  
\setcounter{equation}{0}

%What they could do is to first 
%have each point transmit its location to everyone else, then let each point 
%run Welzl's or Meggido's algorithm to compute the center of the smallest 
%enclosing circle, and then have each point move in a straight line to that 
%point. There are two major problems with this procedure however:
%\\
%\\
%2. Secondly, suppose again that we have $n$ points distributed at random in a disc of radius $\sqrt{n}$. Once each point has computed the location of the SEC, the walk to it will take time $O(\sqrt{n})$. However, this initial computation will take time $O(n)$, which is thus the true time complexity of the procedure. 

\setcounter{equation}{0}

\section{Comments and Questions}\label{sect:discuss}

In this section, whenever we formulate a precise question it should be 
understood that the 11 assumptions made in Section \ref{sect:heart} are
valid. However, just as important an issue going forward is whether there
are algorithms which work just as well under some weaker
set of assumptions about how points can interact. This should be kept in
mind at all times.
\\
\par  
For the task of merging at a single point to be strictly meaningful, 
one must assume from the outset 
that one is dealing with
point particles which are capable of locating other point particles with
infinite precision. This is true in both the classical multi-agent
Rendezvous setting (RP-5) and in ours (Assumption 4 in Section 
\ref{sect:heart}). The points are 
idealisations and each particle has, in 
``reality'' a fixed, if small size. In this paper,
we have considered ``generic'' configurations of points in a disc of
radius $n$ in $\mathbb{R}^2$. 
In order to ensure that the communication graph would be a.a.s. connected, 
it was required that the average density of particles
be $\Omega (\log n)$, hence goes to infinity with $n$. What this implies is
that, once the idealisation of point particles is removed, then
generic configurations of particles will a.a.s. not be connected and
RP-4 fails. This
seems like a serious obstacle to making practical sense of the whole
project of studying Rendezvous for generic configurations of agents. 
So, if you didn't already
consider Theorem \ref{thm:main} 
just an intellectual curiosity, then that seems even
more apparent now. However, it is not clear that all is in vain, especially
if one is satisfied with a randomised algorithm that a.a.s. succeeds, rather
than a fully deterministic procedure. 
Suppose the average 
point density is $\Theta(1)$, this obviously being the most ``realistic''
setting, from what we have just said. Almost surely, there will be isolated
points in this regime, but there will also be large connected
components, so an isolated point does not necessarily 
need to perform a Brownian motion in order to make contact with other 
points, something which (as stated earlier) would on average take an
infinite time to succeed. This leads to our first and most important question:

\begin{question}\label{quest:theta-one-uni} 
\emph{In the notation of Theorem \ref{thm:main}, 
suppose $f(n) = o(n^2 \log n)$. Does there exist a randomised
merging algorithm which a.a.s. runs in finite time and, if so, how 
quickly as a function of $n$? In particular, consider these issues when
$f(n) = \Theta (n^2)$. The same questions can be asked in the classical 
setting, i.e.: assuming RP 1-5, at least as long as we ignore the
problem of storage capacity.}
\end{question}

We suspect nevertheless that the above questions have negative answers. Hence,
perhaps a better idea is to ask the same questions, but conditioned on 
the communcation graph being connected:

\begin{question}\label{quest:theta-one-connect}
\emph{Same question as above, but conditioned on the initial
communication graph being chosen uniformly at random from among all
connected such graphs on $f(n)$ nodes. Even more ``realistically'', one
could condition on the graph being connected and the number of points
per unit area being bounded.}
\end{question}

Once one accepts that each particle has a finite size, a second obvious
conclusion is that, in reality, Rendezvous means that all the particles
come to occupy some sufficiently small region of the plane, rather than merge
at a single point{\footnote{The robustness of the ASY-algorithm to removal
of unrealistic idealisations was tested numerically in 
\cite{AOSY}. In their simulations, they replaced each point robot
by a disc of fixed area and considered $n$ robots to have
rendezvoued once they were all inside a disc of radius
$3 \sqrt{n/2}$.}}. An intermediate step, which is more
mathematically appealing, is to retain the assumption of point
particles but redefine the Rendezvous condition in this way, and see if it
leads to a significant reduction in the run-times of algorithms. In 
Theorem \ref{thm:main}, the run-time is already linear in $n$, so 
relaxing the definition of Rendezvous cannot improve performance
significantly, at least as long as we want some non-trivial convergence
of the particle swarm, that is to a disc of radius $o(n)$. We think this
emphasises the fact that our algorithm is really a procedure for
choosing a \emph{leader}, rather than for merging. Our algorithm for choosing 
a leader can in turn be broken down into two main steps. In the first step, the
graph boundary $\delta G$ is identified approximately. In the second step,
the points which think they might be in $\delta G$ assign representatives to 
walk around the disc boundary
comparing randomly generated binary strings. This second step seems
much more contrived than the first. In addition, the eventual merging will take
place at the leader's location, which is 
a.a.s. very close to the disc boundary. 
In contrast, classical merging procedures
involve some kind of repeated ``averaging'', which means that, for a generic
configuration of points in a disc, those 
closer to the boundary will gradually move inwards, and the eventual merging
will take place somewhere close to the center of the disc. 
\par This leads to at least two possible lines of further questioning. 
On the one hand, we can seek an alternative to our algorithm which more 
resembles classical merging algorithms: 

\begin{question}\label{quest:twostep} 
\emph{In the setting of Section 2, is there a linear-time 
algorithm which does not have the two-step character of ours, in which
first a leader is chosen and then all the points move to its location?}
\end{question}

On the other hand, we can seek alternative 
linear-time algorithms after relaxing
the definition of Rendezvous so that 
the points are only required to come sufficiently close together. Since 
part of the problem is to specify exactly how close, we will not 
formulate a precise question, though 
at the very least, they 
should all be required to move inside some disc of radius $o(n)$. In this
relaxation it is also natural to seek an algorithm which involves less 
sohpisticated communication between agents than that allowed by the 
assumptions of Section 
\ref{sect:heart}. Some communication is probably still necessary, however.
If we consider the ASY-algorithm, for example, and again ignore storage issues,
it was already noted in Section \ref{sect:intro} that the time taken to execute
any non-trivial convergence is
$\Omega(n \log n)$. Indeed, extending a comment made in
Section \ref{sect:intro}, we conjecture that the same is
true of any algorithm which assumes RP 1-5. It would also be interesting
to determine the actual (expected) rate of convergence for such algorithms, 
for example ASY, for
generic connected configurations. As already mentioned in Section 
\ref{sect:intro}, we are not aware of any rigorous treatment of this 
problem in the literature.     
\\
\par If, instead of questioning either the assumptions made in Section 
\ref{sect:heart} or the philosophy of considering generic
configurations, we accept these principles, then there are still
many unanswered questions. What jumps out immediately is the
upper bound on $f(n)$ assumed in Theorem \ref{thm:main}. 
This played two crucial roles:
\par (i) it implied that the size of the graph boundary was a.a.s. $O(n)$
\par (ii) it implied that each point a.a.s. had $O(n)$ neighbors, indeed
$o(n)$ neighbors, though see also Remark \ref{rem:on3}.
Thus, by counting its neighbors, a blue point had a way to estimate how many
random bits it needed to generate. 
\\
\par
First consider (i). Here it is obviously crucial that the points are 
distributed in a disc, that is, a bounded region of $\mathbb{R}^2$ with a 
``nice'' boundary. This leads us to our next two questions. The first is a bit
vague:

\begin{question}\label{quest:lebmeas} 
\emph{For which sequences $(\mathcal{D}_n)_{n=1}^{\infty}$
of Lebesgue measurable compact regions in $\mathbb{R}^2$, satisfying
Area$(\mathcal{D}_n) \rightarrow \infty$ as $n \rightarrow \infty$, 
does an analogue of Theorem \ref{thm:main} hold ? 
The analogue we have in mind here 
is that $f(n)$ points are distributed uniformly and independently in 
$\mathcal{D}_n$, where $f(n)$ grows sufficiently fast so that the 
communication graph is a.a.s. connected, but not so fast that any point is
likely to have more than $O(n)$ neighbors.}
\end{question}

An alternative track, which we find more appealing, is to
consider the multi-agent rendezvous problem on a compact 2-dimensional 
manifold without boundary. 
The simplest question, to which we have no answer, is the following:

\begin{question}\label{quest:sphere} 
\emph{Let $S_n$ denote the $2$-sphere in $\mathbb{R}^3$ 
of radius $n$. Suppose $f(n)$ points are distributed uniformly and 
independently on $S_n$, where $f(n)$ satisfies the same conditions as
in Theorem \ref{thm:main}. Under Assumptions 1-11, 
does there exist a merging algorithm which a.a.s.
runs successfully in time $O(n)$ ?}
\end{question}

We have not seen the latter
question asked even in the classical setting, where RP 1-5 are assumed.
Perhaps this is just our ignorance, or perhaps it is because, 
if one thinks of a sphere as representing the Earth, then a compass provides
a universal point of reference and rendezvous can
trivially be
accomplished in time linear in the diameter. There are two further things
to note about the problem on a $2$-sphere:
\par (a) A Brownian motion confined to $S_n$ will a.a.s. 
arrive within distance one of any point in finite time. Hence there is 
certainly \emph{some} merging procedure on $S_n$ which a.a.s.
runs in finite time. For example, two points could merge once they see each
other, with a point choosing a neighbor at random to merge with if it
has more than one. Once points become isolated they could perform a 
Brownian walk. The expected running time of even this rather stupid
algorithm is unclear to us, however, especially in light of 
Assumption 1 in Section \ref{sect:heart}. 
\par (b) Something close to Question \ref{quest:sphere} arises if we try to 
extend our algorithm to higher dimensions. Consider an initial configuration
of points inside a ball in $\mathbb{R}^3$. Step 1A of our
procedure would identify a subset of blue points close to the
boundary of the ball, hence close to a $2$-sphere. We are not quite left
with Question \ref{quest:sphere}, since these points can also 
move through the interior of the ball now. However, these 
observations suggest that a different approach may also be necesary in
higher dimensions. 
\\
\par
Now suppose we turn instead to point (ii). There is another
obvious question:

\begin{question}\label{quest:denser} 
\emph{Can Theorem \ref{thm:main} be extended to even denser 
configurations of points ? In other words, is there a linear-time
merging algorithm on the disc which succeeds even when the average
point density is super-linear ?}
\end{question}

An additional observation here is that the method for generating random numbers
described in Step 1B of our algorithm is very inefficient when $f(n) \gg n^2
\log n$ since, as long as $f(n)$ grows polynomially in $n$, 
it would suffice for each blue point to generate
$\Omega(\log n)$ bits, whereas in fact each blue point
generates on the order of $f(n)/n^2$ bits. An alternative procedure we
considered is to have the blue points estimate $n$, the diameter of the disc.
We considered a specific, and rather complicated, signalling procedure, 
whereby the blue points send signals to their neighbors
which are relayed inwards in such a manner that, a.a.s., the signals bounce back
only once they have reached very close to the center of the disc. The details
are not important as we could not prove that our procedure worked, but
we can state another question:

\begin{question}\label{quest:estdiam} 
\emph{As an alternative to Step 1B, is there a 
procedure by which the blue points can 
estimate the diameter $n$ of the disc, up to an absolute multiplicative factor
say, a.a.s. in time $O(n)$ ?}
\end{question}

We conclude with a couple more questions which seem interesting, though
are perhaps less important. 
One striking aspect of our ``choose a leader then move to it'' algorithm is that
all the points merge at one place. This is also true
of classical rendezvous procedures. Noting also Assumption 1, we can ask

\begin{question}\label{quest:intermediate} 
\emph{In the setting of Section 2, is there a linear-time
algorithm which includes merging of groups of points at intermediate 
steps and at different locations, 
rather than all having all the points merge only at the end at
a single location ?}
\end{question}

We pose our last question as it may be interesting in its own right,
as a problem in computational geometry:
 
\begin{question}\label{quest:exactbdry} 
\emph{In the setting of Section 2, is there a 
linear-time algorithm for identifying the graph boundary 
exactly?}
\end{question}

Beyond the discussion above, several other possible lines of investigation 
present themselves 
naturally, though they take us into even less well-defined 
territory. For example, one might consider initial configurations of 
points which are not uniform i.i.d. Ultimately, one can question the 
various assumptions listed in Section \ref{sect:heart}, and try to formulate an 
interesting problem under some other set of assumptions. Last but not least,
in the setting of Section \ref{sect:heart}, 
is there a really simple linear-time algorithm
which we have completely missed, which would render most of this
manuscript unnecessary ? 
%One thing which we 
%find particularly interestign here is the Assumption 2, saying that
%direct communication can only take place locally. How much does it really
%help it we drop this condition ? On the one hand, it certainly allows 
%for algorithms which are much simpler to describe. For example, if points
%could communicate globally then they could choose a leader as follows: 

%\setcounter{equation}{0}

%\section{Suggestions for future work}

%\setcounter{equation}{0}

%\section*{Acknowledgements}

%I'll be back!

\vspace*{1cm}

\end{document}